

\documentclass[11pt]{article} 

\usepackage[utf8]{inputenc} 


\usepackage{geometry} 
\geometry{a4paper} 

\usepackage{graphicx} 
\usepackage{siunitx}

\usepackage{amssymb,amsmath,amsthm,mathtools}
\usepackage{subfig,caption}
\usepackage{hyperref}

\usepackage{booktabs} 
\usepackage{array,url} 
\usepackage{paralist} 
\usepackage{verbatim} 
\usepackage{subfig} 

\usepackage{fancyhdr} 
\pagestyle{fancy} 
\lhead{}\chead{}\rhead{}
\lfoot{}\cfoot{\thepage}\rfoot{}

\usepackage{sectsty}
\allsectionsfont{\sffamily\mdseries\upshape} 

\usepackage[nottoc,notlof,notlot]{tocbibind} 
\usepackage[titles,subfigure]{tocloft} 

\usepackage{authblk}

\newtheorem{theorem}{Theorem}
\newtheorem{corollary}{Corollary}[theorem]
\newtheorem{proposition}{Proposition}[theorem]
\begin{document}
%
\title{\textbf{Tight Bounds on the Coefficients of\\ Consecutive $k$-out-of-$n$:$F$ Systems}}
%
%
\author[1,2]{Vlad-Florin Dr\u{a}goi}
\author[1] {Simon R. Cowell} 
\author[1] {Valeriu Beiu}

%
%
\affil[1]{Department of Mathematics and Computer Sciences \protect\\ Aurel Vlaicu University of Arad, Romania}
\affil[2]{LITIS, University of Rouen Normandie, France}
\date{}
\maketitle              

\begin{abstract} In this paper we compute the coefficients of the reliability polynomial of a consecutive-$k$-out-of-$n$:$F$ system, in Bernstein basis, using the generalized Pascal coefficients. Based on well-known combinatorial properties of the generalized Pascal triangle we determine simple closed formulae for the reliability polynomial of a consecutive system for particular ranges of $k$. Moreover, for the remaining ranges of $k$ (where we were not able to determine simple closed formulae), we establish easy to calculate sharp bounds for the reliability polynomial of a consecutive system.\\

\textbf{Keywords:} Consecutive systems, Generalized Pascal triangles, Bernstein basis, Reliability polynomial.
\end{abstract}
\section{Introduction}

A relatively hidden gem of network reliability is represented by the class of consecutive systems. They were introduced in 1980 as $r$-successive-out-of-$n$:$F$ systems \cite{K80}, before being aptly renamed consecutive-$k$-out-of-$n$:$F$ systems in 1981 \cite{CN1981}. Clearly, this type of redundancy scheme came reasonably late to the “reliability table,” i.e., almost 30 years after the majority-voting and the multiplexing concepts (both gate-level based reliability schemes) were introduced by von Neumann in January 1952\footnote{John von Neumann presented his work in five seminal lectures at the California Institute of Technology (Caltech) in January 1952. They are available, based on the notes taken by R. S. Pierce, at {\url{https://sites.google.com/site/michaeldgodfrey/vonneumann/vN_Caltech_Lecture.pdf}}}. A printed version of those lectures was published in April 1956 \cite{1956_vN}, followed in September 1956 by the introduction of the hammock networks by Moore and Shannon \cite{1956_MS_1} (the first device-level based reliability scheme). For more information on consecutive systems the interested reader should consult \cite{C2000}, \cite{E2010}, while it is worth mentioning that the associated probability problem was proposed and solved as early as 1718 by de Moivre \cite{dM1718} (see also \cite{DB15}), with the associated graphs being proven most reliable in the late 90's (see \cite{DCLLG_2004}).

Consecutive-$k$-out-of-$n$:$F$ systems belong to the class of device-level based reliability schemes (although “devices” might be quite complex entities), and are aimed at communications, as opposed to gate-level based reliability schemes which are targeting computations. Such systems can be abstracted as networks/ graphs, network reliability being a field pioneered by \cite{1956_MS_1} and which has significantly evolved ever since (see \cite{CC1997}, \cite{C87}, \cite{P2018}). The fundamental problems in network reliability are to determine: two-terminal, $k$-terminal, and all-terminal reliability of a network, and are all known to be very difficult in general (\#P-complete  \cite{K80}, \cite{MSW2018}, \cite{PB1983} \cite{1979_V}). That is why even the best algorithms are time consuming \cite{CK2008}, \cite{GGK_2016}, \cite{MSW2018}, and lower and upper bounds were investigated as efficient alternatives to exact but tedious computations. In the particular case of consecutive-$k$-out-of-$n$:$F$ systems, bounds have been reported staring from 1981 \cite{CN1981}, and improved over time (see \cite{CC1997}, \cite{C87}, \cite{DB15}, \cite{P2018}, \cite{P1986}). A 'midway path' forward is to bound the coefficients of the reliability polynomial \cite{C1990}, \cite{OW2002}, \cite{BCS2010}, and follow with the exact polynomial computations. All of these different approaches reveal wide trade-offs between accuracy and time-complexity. 

In this paper we are investigating a 'midway path' approach for the particular case of consecutive-$k$-out-of-$n$:$F$ systems, and we will show that most of the coefficients can be quite easily computed exactly, while only a handful of them are computationally demanding, but can be bounded by reasonably simple formulas.

\subsection{Consecutive systems}
A consecutive-$k$-out-of-$n$:$F$ system corresponds to a sequence of $n$ independent, identically distributed (i.i.d.) Bernoulli trials, with common probability of success $p$, in which the system itself is deemed to have failed if the sequence includes a run of at least $k$ consecutive failures, and to have succeeded, otherwise. The \emph{reliability} of the system is the probability $R(k,n;p)$ that it succeeds. We can write this probability as a homogeneous polynomial of degree $n$ in $p$ and $q$, where $q=1-p$, as follows:
\begin{equation}
\label{eq: reliability polynomial}
R(k,n;p)=\sum_{i=0}^n N_{n,k,i} p^i q^{n-i},
\end{equation}
where $N_{n,k,i}$ is the number of sequences of $n$ trials that include exactly $i$ successes, in which the longest consecutive run of failures has length strictly less than $k$.

\subsection{Standard multinomial coefficient}

A well known bins-and-balls counting problem that we consider here is the following. What is the number of ways in which $n$ identical balls can be distributed among a sequence of $i$ distinct bins, such that bins may be empty, and no bin may contain more than $k$ balls? The answer to this problem is given by the standard multinomial coefficient, denoted $\binom{i}{n}_k$. The algebraic description of $\binom{i}{n}_k$ is the following 
\begin{equation}
(1+z+z^2+\dots+z^k)^i=\sum\limits_{a\ge 0}^{}\binom{i}{a}_kz^a,\label{def:gen_pascal}
\end{equation} 
with $\binom{i}{a}_1$ the usual binomial coefficient and $\binom{i}{a}_k=0$ for $a>ik.$ 

More generally, such objects are also known to count the number of $A$-restricted compositions of an integer $n$ into $i$ parts. That is, the number of ways, $\binom{i}{n}_{(1)_{j \in A}}$, in which $n$ can be written as the sum of a sequence of $i$ integers drawn from a given subset $A \subseteq \{0, 1, \ldots \}$, with replacement (i.e., the order is important). When $A=\{0, \ldots, k\}$, we simply use the $\binom{i}{n}_k$ notation.

\section{Results}

\begin{theorem}
\label{thm: main result}
We have
\begin{equation}
\label{eq: bins and balls}
\begin{split}
N_{n,k,i}
&=[z^{n-i}](1+z+\cdots+z^{k-1})^{i+1} \\
&=[z^i]z^{n-(k-1)(i+1)} \left ( \frac{1 - z^k}{1 - x} \right )^{i+1},
\end{split}
\end{equation}
where $[z^t]f(z)$ denotes the coefficient of $z^t$ in the formal power series expansion of $f(z)$ in powers of $z$.
\end{theorem}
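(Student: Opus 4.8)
The plan is to establish the first equality by a direct bijective argument, and then to obtain the second equality from an elementary generating-function manipulation together with the palindromic symmetry of $(1+z+\cdots+z^{k-1})^{i+1}$.

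For the first equality I would reason as follows. Encode a sequence of $n$ trials as a binary string of length $n$, writing $1$ for a success and $0$ for a failure, and fix the number of ones to be $i$. The $i$ ones split the string into $i+1$ (possibly empty) maximal blocks of consecutive zeros: the block preceding the first one, the $i-1$ blocks lying strictly between consecutive ones, and the block following the last one. Let $a_0,a_1,\ldots,a_i$ be the lengths of these blocks; then $a_0+a_1+\cdots+a_i=n-i$ with each $a_j\ge 0$. Since each block is a \emph{maximal} run of failures, the requirement that the longest run of failures have length strictly less than $k$ is equivalent to $a_j\le k-1$ for every $j$; conversely, every tuple $(a_0,\ldots,a_i)$ of integers in $\{0,1,\ldots,k-1\}$ summing to $n-i$ arises from a unique admissible string. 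Hence $N_{n,k,i}$ is the number of such tuples, which by \eqref{def:gen_pascal} (taking $i+1$ factors, each of degree $k-1$) is exactly $[z^{n-i}](1+z+\cdots+z^{k-1})^{i+1}$; equivalently, $N_{n,k,i}=\binom{i+1}{n-i}_{k-1}$ in the notation of the introduction.

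For the second equality I would first rewrite $1+z+\cdots+z^{k-1}=(1-z^k)/(1-z)$, so that $N_{n,k,i}=[z^{n-i}]\bigl((1-z^k)/(1-z)\bigr)^{i+1}$. The polynomial $P(z):=(1+z+\cdots+z^{k-1})^{i+1}$ has degree $(k-1)(i+1)$ and is self-reciprocal, namely $z^{(k-1)(i+1)}P(1/z)=P(z)$, because each factor $1+z+\cdots+z^{k-1}$ is palindromic. Consequently $[z^{n-i}]P(z)=[z^{(k-1)(i+1)-(n-i)}]P(z)$, and since $\bigl((k-1)(i+1)-(n-i)\bigr)+\bigl(n-(k-1)(i+1)\bigr)=i$, extracting the coefficient of $z^{(k-1)(i+1)-(n-i)}$ from $P(z)$ is the same as extracting the coefficient of $z^{i}$ from $z^{\,n-(k-1)(i+1)}P(z)$. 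This yields $N_{n,k,i}=[z^i]\,z^{\,n-(k-1)(i+1)}\bigl((1-z^k)/(1-z)\bigr)^{i+1}$, which is the claimed formula (the ``$1-x$'' appearing in the denominator in \eqref{eq: bins and balls} being a typo for ``$1-z$'').

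I do not expect a serious obstacle here: both steps are short. The points that require care are (i) verifying that the bijection in the first step is exact in the boundary cases — strings that begin or end with failures, and the extreme cases $i=0$ and $i=n$ — so that ``every maximal run of failures has length $\le k-1$'' is genuinely equivalent to the per-block bound; and (ii) the exponent bookkeeping in the second step, where one must track the degree $(k-1)(i+1)$ and use that $P$ is a genuine finite polynomial, so that all coefficient extractions are unambiguous.
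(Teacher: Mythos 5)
Your proof is correct and follows essentially the same route as the paper: the first equality is established by the same bijection between admissible trial sequences and tuples of $i+1$ maximal failure-run lengths in $\{0,\dots,k-1\}$ summing to $n-i$, and the second equality is the same coefficient-reflection step, which the paper phrases via the identity $[z^{n-i}]f(z)=[z^i](z^n f(1/z))$ and you phrase equivalently via the self-reciprocality of $(1+z+\cdots+z^{k-1})^{i+1}$. You are also right that the ``$1-x$'' in the statement is a typo for ``$1-z$''.
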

\begin{proof}
Our proof is a combinatorial one, that is, we show that two counting problems are identical. Fixing $k,n$ and $i$, consider a sequence of $n$ trials that includes exactly $i$ successes and in which all the runs of consecutive failures have length at most $k-1$. We may consider this sequence as a sequence of $i+1$ runs of consecutive failures of lengths between $0$ and $k-1$ inclusive, each consecutive pair of such runs separated by a single success, in which the total number of failures is $n-i$.
The number of such sequences, which is $N_{n,k,i}$, is therefore also the number of ways in which $n-i$ identical balls can be distributed among a sequence of $i+1$ distinct bins, such that bins may be empty, and no bin may contain more than $k-1$ balls.
The first equality in \eqref{eq: bins and balls} now follows directly from \eqref{def:gen_pascal}, 
and the second one 
follows from the identity $[z^{n-i}]f(z)=[z^i](z^n f(1/z))$.
\end{proof}
\subsection{Properties of the reliability polynomials}

\begin{theorem}
$N_{n,k,i}$ satisfy the following properties:
\begin{align}
N_{n,k,i}&=0, \forall i\leq i_{n,k}\triangleq\left\lfloor\frac{n-k+1}{k}\right\rfloor;\label{eq:Nk1}\\
&=\binom{n}{n-i},\forall i\ge n-k+1; \label{eq:Nk2}\\
&=\sum\limits_{j=0}^{\lfloor\frac{n-i}{k}\rfloor}(-1)^j\binom{i+1}{j}\binom{n-jk}{i}, \forall i \in \{i_{n,k}+1,\dots,n-k\}.\label{eq:Nk3}
\end{align}
\end{theorem}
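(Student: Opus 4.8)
The plan is to derive all three identities from Theorem~\ref{thm: main result}, which already identifies $N_{n,k,i}$ with $[z^{n-i}](1+z+\cdots+z^{k-1})^{i+1}$, i.e. with the number of ways to place $n-i$ balls into $i+1$ bins of capacity $k-1$. Formulas \eqref{eq:Nk1} and \eqref{eq:Nk2} should follow quickly from this bins-and-balls picture by counting extremes. For \eqref{eq:Nk1}: $i+1$ bins of capacity $k-1$ hold at most $(k-1)(i+1)$ balls, so the count is zero whenever $n-i > (k-1)(i+1)$; rearranging this inequality gives $n < ki + k - 1$, i.e. $i < (n-k+1)/k$, and since $i$ is an integer this is exactly $i \le \lfloor (n-k+1)/k \rfloor = i_{n,k}$. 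For \eqref{eq:Nk2}: when $i \ge n-k+1$ we have $n-i \le k-1$, so no single bin can overflow even if all $n-i$ balls land in it; hence the capacity constraint is vacuous and the count reduces to the number of weak compositions of $n-i$ into $i+1$ parts, which is $\binom{(n-i)+(i+1)-1}{i+1-1} = \binom{n}{i} = \binom{n}{n-i}$.

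For \eqref{eq:Nk3}, the plan is inclusion–exclusion on the capacity constraints, the standard way to extract $[z^{n-i}]$ from $\left(\frac{1-z^k}{1-z}\right)^{i+1}$. Expanding $(1-z^k)^{i+1} = \sum_j (-1)^j \binom{i+1}{j} z^{jk}$ and using $(1-z)^{-(i+1)} = \sum_m \binom{m+i}{i} z^m$, one collects the coefficient of $z^{n-i}$:
\begin{equation}
N_{n,k,i} = \sum_{j \ge 0} (-1)^j \binom{i+1}{j} \binom{(n-i-jk) + i}{i} = \sum_{j \ge 0} (-1)^j \binom{i+1}{j} \binom{n-jk}{i},
\end{equation}
where the sum is naturally truncated at $j = \lfloor (n-i)/k \rfloor$ because $\binom{n-jk}{i}$ vanishes once $n-jk < i$ (equivalently $jk > n-i$). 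This matches the claimed formula. Combinatorially this is just inclusion–exclusion over the events ``bin $\ell$ receives at least $k$ balls'': $\binom{i+1}{j}$ chooses which $j$ bins are forced to overflow, pre-loading them with $k$ balls each removes $jk$ from the total, and $\binom{n-jk}{i}$ counts the unconstrained weak compositions of the remaining $n-i-jk$ balls into $i+1$ bins.

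I expect the only real subtlety to be bookkeeping with the summation ranges and floor functions — making sure the truncation point $\lfloor (n-i)/k \rfloor$ stated in \eqref{eq:Nk3} genuinely agrees with ``all $j$ for which the summand is nonzero,'' and that \eqref{eq:Nk3} is being asserted precisely on the range $i \in \{i_{n,k}+1, \dots, n-k\}$ where neither degenerate case applies (so that the sum is nonempty and has at least one correction term). One should also double-check the edge of the range: at $i = n-k$ we have $\lfloor (n-i)/k \rfloor = 1$, so \eqref{eq:Nk3} reads $\binom{n}{i} - \binom{i+1}{1}\binom{n-k}{i} = \binom{n}{n-k} - (n-k+1)\cdot 1$, which should be verified to be consistent with the $i \ge n-k+1$ formula in the limiting sense and with direct enumeration. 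None of this is deep; it is careful index management on top of the generating-function identity already supplied by Theorem~\ref{thm: main result}.
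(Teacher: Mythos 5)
Your overall strategy --- reading all three identities off the generating-function form of $N_{n,k,i}$ supplied by Theorem~\ref{thm: main result} --- is the natural one (the paper offers no separate proof of this theorem, so this is evidently how the identities are meant to be obtained), and your treatments of \eqref{eq:Nk2} and \eqref{eq:Nk3} are correct: the stars-and-bars count when the capacity constraint is slack, and the inclusion--exclusion extraction of $[z^{n-i}]\bigl((1-z^k)/(1-z)\bigr)^{i+1}$ with the truncation at $j=\lfloor (n-i)/k\rfloor$ justified by the vanishing of coefficients of negative powers of $z$ in $(1-z)^{-(i+1)}$, are both sound. (There is a sign slip in your rearrangement --- the vanishing condition is $n>ki+k-1$, not $n<ki+k-1$ --- but the final inequality $i<(n-k+1)/k$ that you actually use is the correct one.)

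The genuine gap is in the last step of your argument for \eqref{eq:Nk1}. You pass from the strict inequality $i<(n-k+1)/k$ to $i\le\lfloor(n-k+1)/k\rfloor$ ``since $i$ is an integer,'' but this equivalence fails exactly when $k$ divides $n-k+1$, i.e.\ when $n\equiv -1\pmod{k}$: in that case $i=i_{n,k}=(n-k+1)/k$ satisfies $n-i=(k-1)(i+1)$, so $N_{n,k,i_{n,k}}=[z^{(k-1)(i+1)}](1+z+\cdots+z^{k-1})^{i+1}=1$, not $0$. Concretely, for $k=2$, $n=3$ one has $i_{3,2}=1$ and $N_{3,2,1}=1$ (the sequence FSF), and for $k=3$, $n=5$ one has $i_{5,3}=1$ and $N_{5,3,1}=1$ (the sequence FFSFF); both contradict \eqref{eq:Nk1} as stated. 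This is in fact a defect of the theorem statement itself rather than only of your proof --- the correct assertion is that $N_{n,k,i}=0$ if and only if $i<(n-k+1)/k$, i.e.\ $i\le\lceil(n-k+1)/k\rceil-1$ --- but since you explicitly flagged the floor-function bookkeeping as the one subtlety and then invoked the faulty equivalence, this exceptional case is precisely what you should have caught. Everything else in your write-up, including the spot check at $i=n-k$, is correct.
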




\begin{corollary}The reliability polynomial of a consecutive-$k$-out-of-$n$:$F$ system  \begin{equation}R(k,n;p)=\sum\limits_{i=i_{n,k}+1}^{n}\binom{n}{i}p^iq^{n-i}-\sum\limits_{i=i_{n,k}+1}^{n-k}\sum\limits_{j=1}^{\lfloor\frac{n-i}{k}\rfloor}(-1)^{j+1}\binom{i+1}{j}\binom{n-jk}{i}p^iq^{n-i}.
\end{equation}
\end{corollary}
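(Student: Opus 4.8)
The plan is to derive the Corollary directly from the three-way case analysis of the preceding Theorem by substituting each piece into the defining expansion \eqref{eq: reliability polynomial}. First I would split the sum $R(k,n;p)=\sum_{i=0}^n N_{n,k,i}p^iq^{n-i}$ into three ranges according to \eqref{eq:Nk1}, \eqref{eq:Nk2}, \eqref{eq:Nk3}: the terms with $i\le i_{n,k}$ vanish identically by \eqref{eq:Nk1}, so the sum effectively starts at $i=i_{n,k}+1$; the terms with $i\ge n-k+1$ contribute $\sum_{i\ge n-k+1}\binom{n}{i}p^iq^{n-i}$ using \eqref{eq:Nk2} together with $\binom{n}{n-i}=\binom{n}{i}$; and the terms with $i_{n,k}+1\le i\le n-k$ contribute $\sum_{i=i_{n,k}+1}^{n-k}\big(\sum_{j=0}^{\lfloor (n-i)/k\rfloor}(-1)^j\binom{i+1}{j}\binom{n-jk}{i}\big)p^iq^{n-i}$ by \eqref{eq:Nk3}.

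The second step is a purely algebraic rearrangement. In the middle range I would peel off the $j=0$ term, which equals $\binom{i+1}{0}\binom{n}{i}=\binom{n}{i}$, so that the $j=0$ contributions over $i=i_{n,k}+1,\dots,n-k$ merge with the top-range contributions $i=n-k+1,\dots,n$ to form a single clean sum $\sum_{i=i_{n,k}+1}^{n}\binom{n}{i}p^iq^{n-i}$. What remains of the middle range is $\sum_{i=i_{n,k}+1}^{n-k}\sum_{j=1}^{\lfloor (n-i)/k\rfloor}(-1)^j\binom{i+1}{j}\binom{n-jk}{i}p^iq^{n-i}$, and flipping the sign $(-1)^j=-(-1)^{j+1}$ turns this into $-\sum_{i=i_{n,k}+1}^{n-k}\sum_{j=1}^{\lfloor (n-i)/k\rfloor}(-1)^{j+1}\binom{i+1}{j}\binom{n-jk}{i}p^iq^{n-i}$, which is precisely the subtracted double sum in the statement.

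There is essentially no hard obstacle here; the Corollary is a bookkeeping consequence of the Theorem. The one point that needs a word of care is checking that the top range $i\ge n-k+1$ and the middle range $i_{n,k}+1\le i\le n-k$ genuinely partition $\{i_{n,k}+1,\dots,n\}$ without overlap or gap — i.e. that $n-k+1$ is exactly one more than $n-k$ — and that the ranges are nonempty in the relevant parameter regime (when $n-k < i_{n,k}+1$ the middle sum is empty and $R$ reduces to the single binomial sum, consistent with the formula). I would also note in passing that the $j=0$ term of \eqref{eq:Nk3} is well-defined for every $i$ in the middle range since $\lfloor (n-i)/k\rfloor\ge 1$ there, so peeling it off is legitimate. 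With these remarks the identity follows by combining the three displays and re-indexing, and I would simply write "the claim follows by substituting \eqref{eq:Nk1}–\eqref{eq:Nk3} into \eqref{eq: reliability polynomial} and separating the $j=0$ term."
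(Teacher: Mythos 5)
Your proposal is correct and is exactly the intended derivation: the paper states this Corollary without proof as an immediate consequence of the preceding Theorem, and your substitution of \eqref{eq:Nk1}--\eqref{eq:Nk3} into \eqref{eq: reliability polynomial}, followed by peeling off the $j=0$ term (which equals $\binom{n}{i}$) and flipping the sign via $(-1)^j=-(-1)^{j+1}$, is the straightforward bookkeeping the authors intend. Your added remarks on the partition of the index range and the nonemptiness of the inner sum are correct and harmless.
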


Equation \eqref{eq:Nk3} gives the full description of the coefficient $N_{n,k,i}$ regardless of the values of $k$ and $n$. However, by taking a closer look we can deduce simpler expressions for some sub-sets of $\{i_{n,k}+1,\dots,n-k\}.$

\begin{corollary}\label{cor:2}
\begin{align}
N_{n,k,i}&=\binom{n}{i}-(i+1)\binom{n-k}{i}, \forall i \in \{n-2k+1,\dots, n-k\};\label{eq:Nk4}\\
N_{n,k,i}&=\binom{n}{i}-(i+1)\binom{n-k}{i}+\binom{i+1}{2}\binom{n-2k}{i},\forall i \in \{n-3k+1,\dots, n-2k\}.\label{eq:Nk5}
\end{align}
\end{corollary}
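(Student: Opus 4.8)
The plan is to obtain both identities as immediate specializations of \eqref{eq:Nk3}, the only substantive step being the evaluation of the upper summation limit $\lfloor\frac{n-i}{k}\rfloor$ on each of the two prescribed sub-ranges of $i$.

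For \eqref{eq:Nk4} I would fix $i\in\{n-2k+1,\dots,n-k\}$. Then $n-i$ runs over $\{k,k+1,\dots,2k-1\}$, so $1\le\frac{n-i}{k}<2$ and hence $\lfloor\frac{n-i}{k}\rfloor=1$. Substituting this into \eqref{eq:Nk3} leaves only the terms $j=0$ and $j=1$, i.e.
\[
N_{n,k,i}=\binom{n}{i}-\binom{i+1}{1}\binom{n-k}{i}=\binom{n}{i}-(i+1)\binom{n-k}{i},
\]
which is exactly \eqref{eq:Nk4}. Formula \eqref{eq:Nk5} is proved in the same way: for $i\in\{n-3k+1,\dots,n-2k\}$ one has $n-i\in\{2k,\dots,3k-1\}$, so $2\le\frac{n-i}{k}<3$, giving $\lfloor\frac{n-i}{k}\rfloor=2$; now \eqref{eq:Nk3} retains precisely $j=0,1,2$, which yields $\binom{n}{i}-(i+1)\binom{n-k}{i}+\binom{i+1}{2}\binom{n-2k}{i}$.

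The one point that needs care — and essentially the only obstacle — is to confirm that both sub-ranges actually sit inside the interval $\{i_{n,k}+1,\dots,n-k\}$ on which \eqref{eq:Nk3} is valid. The upper ends are harmless, since $n-k$ and $n-2k$ are both $\le n-k$, so neither range reaches into the region $i\ge n-k+1$ governed by \eqref{eq:Nk2}; for the lower ends one must check $n-2k+1>i_{n,k}$ (resp.\ $n-3k+1>i_{n,k}$), which, using $i_{n,k}=\lfloor(n-k+1)/k\rfloor\le(n-k+1)/k$, reduces to an elementary inequality between $n$ and $k$ and thus holds once $n$ is sufficiently large relative to $k$; in the remaining small cases one intersects the sub-range with $\{i_{n,k}+1,\dots,n-k\}$, the discarded values of $i$ being handled by \eqref{eq:Nk1}. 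A cleaner route that sidesteps the $i_{n,k}$ bookkeeping is to run the same truncation argument directly on the generating-function form $N_{n,k,i}=[z^{n-i}]\bigl((1-z^{k})/(1-z)\bigr)^{i+1}$ from Theorem~\ref{thm: main result}: expanding $(1-z^{k})^{i+1}=\sum_{j\ge 0}(-1)^{j}\binom{i+1}{j}z^{jk}$ against $(1-z)^{-(i+1)}=\sum_{m\ge 0}\binom{m+i}{i}z^{m}$ and keeping only the pairs with $jk\le n-i$ makes it transparent that exactly the $j$ with $jk\le n-i$ survive, which on the two ranges above is $j\in\{0,1\}$ and $j\in\{0,1,2\}$ respectively.
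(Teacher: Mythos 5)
Your proposal is correct and follows the same route the paper intends: Corollary \ref{cor:2} is obtained by specializing \eqref{eq:Nk3}, observing that $\lfloor (n-i)/k\rfloor$ equals $1$ on $\{n-2k+1,\dots,n-k\}$ and $2$ on $\{n-3k+1,\dots,n-2k\}$, so the sum truncates to $j\le 1$ and $j\le 2$ respectively. Your extra care about the lower ends of the ranges (and the generating-function argument that resolves it) is a reasonable addition but not a departure from the paper's approach.
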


Relying on these results we will analyze particular cases for a fixed $n$ and $k$ in particular ranges. These analyses will lead to simple formulae for the coefficients, and thus for the reliability of a consecutive system. Let us begin with $k\in \{1,2,n\}.$
\begin{proposition}~
\begin{itemize}
\item For $k=1$ $N_{n,1,i}=0, \forall i\neq n$, and $N_{n,k,n}=1$ and  \begin{equation}R(1,n;p)=p^n.\end{equation}  
\item For $k=2$ $N_{n,2,i}=\binom{i+1}{n-i}$ for $0\leq i\leq n$, and \begin{equation}R(2,n;p)=\sum\limits_{i=0}^n\binom{i+1}{n-i}p^iq^{n-i}.\end{equation}  
\item For $k=n$ $N_{n,n,i}=\binom{n}{i}, \forall i\ge 1$, and $N_{n,n,0}=0$ and \begin{equation}R(n,n;p)=\sum_{i=1}^n \binom{n}{i} p^i q^{n-i}.\end{equation}
\end{itemize}
\end{proposition}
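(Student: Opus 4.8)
The plan is to treat each of the three cases $k \in \{1, 2, n\}$ separately, in each case simply specializing the general formulae already established (Theorem~\ref{thm: main result} and equations \eqref{eq:Nk1}--\eqref{eq:Nk3}), and then recognizing the resulting sum as a known closed form. In every case the reliability polynomial statement follows immediately from the coefficient statement by substituting into \eqref{eq: reliability polynomial}, so the real content is the identification of $N_{n,k,i}$.

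For $k=1$: here a run of $1$ consecutive failure already causes the system to fail, so the system succeeds exactly when there are no failures at all, i.e. $i = n$. Formally, $i_{n,1} = \lfloor (n-1+1)/1 \rfloor = n$, so \eqref{eq:Nk1} gives $N_{n,1,i} = 0$ for all $i \le n$ with $i \le i_{n,1} = n$; that is, $N_{n,1,i}=0$ for every $i$, \emph{except} that the only surviving value is $i=n$, where the single all-success sequence contributes $N_{n,1,n}=1$ (equivalently, read off $[z^0](1)^{n+1}=1$ from \eqref{eq: bins and balls}). Hence $R(1,n;p)=p^n$. For $k=n$: a failing sequence needs $n$ consecutive failures, which means all $n$ trials fail; so the only failing sequence is the all-failure one, with $i=0$. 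Thus $N_{n,n,0}=0$ and for $i\ge 1$ every sequence with $i$ successes succeeds, giving $N_{n,n,i}=\binom{n}{i}$. This also matches \eqref{eq:Nk2}, whose hypothesis $i \ge n-k+1 = 1$ covers exactly $i \ge 1$. Summing $\sum_{i=1}^n \binom{n}{i}p^i q^{n-i}$ gives the stated polynomial.

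The case $k=2$ is the one requiring a genuine identity rather than a triviality, and is therefore the main (though still mild) obstacle. From the first line of \eqref{eq: bins and balls} we have $N_{n,2,i} = [z^{n-i}](1+z)^{i+1} = \binom{i+1}{n-i}$, using the ordinary binomial theorem and the convention that $\binom{i+1}{n-i}=0$ when $n-i<0$ or $n-i>i+1$. One should note this is consistent with the boundary behaviour: it vanishes for $n-i > i+1$, i.e. $i < (n-1)/2$, matching $i_{n,2} = \lfloor(n-1)/2\rfloor$ in \eqref{eq:Nk1}, and equals $\binom{n}{n-i}$ when $i \ge n-1$ since then $i+1 \ge n$ so the top index is large enough, matching \eqref{eq:Nk2}. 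Substituting into \eqref{eq: reliability polynomial} yields $R(2,n;p)=\sum_{i=0}^n \binom{i+1}{n-i} p^i q^{n-i}$ directly.

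In each case I would also remark briefly that the three results are mutually consistent with the general Corollary: for instance the $k=n$ statement is the degenerate case where the correction sum in the Corollary is empty (since $i_{n,n}+1 = 1$ and $n-k = 0 < 1$), leaving just $\sum_{i=1}^n \binom{n}{i} p^i q^{n-i}$. No deep combinatorics is needed beyond what the earlier results supply; the write-up is essentially three short specializations plus the observation that the binomial-coefficient conventions make the boundary cases \eqref{eq:Nk1}--\eqref{eq:Nk2} automatic.
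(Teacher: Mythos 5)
Your proposal is correct and follows the route the paper intends: each case is a direct specialization of Theorem~\ref{thm: main result} (equivalently of \eqref{eq:Nk1}--\eqref{eq:Nk3}), with the only nontrivial step being the $k=2$ identification $[z^{n-i}](1+z)^{i+1}=\binom{i+1}{n-i}$, which you handle correctly. One small caution: for $k=1$ the paper's \eqref{eq:Nk1} gives $i_{n,1}=n$ and, read literally, would force $N_{n,1,n}=0$, contradicting the claim; you are right to resolve this boundary case by computing $[z^{0}]1^{\,n+1}=1$ directly from \eqref{eq: bins and balls} rather than leaning on \eqref{eq:Nk1}.
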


Next, we consider the case when $n-2k<0$ in \eqref{eq:Nk4}, and the case when $n-3k<0$ in \eqref{eq:Nk5}.
\clearpage
\begin{proposition}\label{pr:2}~~

\begin{itemize}
\item For any $k\ge \lfloor\frac{n}{2}\rfloor$ we have $N_{n,k,i}=\binom{n}{i}, \forall i>n-k$, and $N_{n,k,i}=\binom{n}{i}-(i+1)\binom{n-k}{i}, \forall i\in\{\frac{n-k+1}{k},\dots,n-k+1\}.$ It follows that
\begin{equation} R(k,n;p)=\sum\limits_{i=i_{n,k}+1}^{n}\binom{n}{i}p^iq^{n-i}-\sum\limits_{i=i_{n,k}+1}^{n-k}(i+1)\binom{n-k}{i}p^iq^{n-i}.
\end{equation}
\item For any $\lfloor\frac{n}{3}\rfloor\le k< \lfloor\frac{n}{2}\rfloor$ we have $N_{n,k,i}=\binom{n}{i}, \forall i>n-k$, $N_{n,k,i}=\binom{n}{i}-(i+1)\binom{n-k}{i}, \forall i\in\{n-2k+1,\dots,n-k\}$, and $N_{n,k,i}=\binom{n}{i}-(i+1)\binom{n-k}{i}+\binom{i+1}{2}\binom{n-2k}{i}, \forall i\in\{\frac{n-k+1}{k},\dots,n-2k\}.$ This implies
\begin{align}
\begin{split}
     R(k,n;p)&=\sum\limits_{i=i_{n,k}+1}^{n}\binom{n}{i}p^iq^{n-i}-\sum\limits_{i=i_{n,k}+1}^{n-k}(i+1)\binom{n-k}{i}p^iq^{n-i}\\ &+\sum\limits_{i=i_{n,k}+1}^{n-2k}\binom{i+1}{2}\binom{n-2k}{i}p^iq^{n-i} .
\end{split}
     \end{align}
\end{itemize}
\end{proposition}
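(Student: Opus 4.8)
The plan is to derive both bullets by specialising the three-part description of $N_{n,k,i}$ established earlier — it is $0$ for $i\le i_{n,k}$ by \eqref{eq:Nk1}, it equals $\binom{n}{i}$ for $i\ge n-k+1$ by \eqref{eq:Nk2} (where also $\binom{n-k}{i}=\binom{n-2k}{i}=0$), and on the intermediate set $\{i_{n,k}+1,\dots,n-k\}$ it is the alternating sum \eqref{eq:Nk3} — the only new ingredient being an elementary comparison of the various index ranges. Recall that in \eqref{eq:Nk3} the summation variable $j$ runs from $0$ to $\lfloor(n-i)/k\rfloor$, and that Corollary~\ref{cor:2} merely records the two ways this sum degenerates: it collapses to \eqref{eq:Nk4} as soon as $\lfloor(n-i)/k\rfloor\le 1$, i.e.\ for $i\ge n-2k+1$, and to \eqref{eq:Nk5} as soon as $\lfloor(n-i)/k\rfloor\le 2$, i.e.\ for $i\ge n-3k+1$. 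It will be convenient to use the identity $i_{n,k}=\lfloor(n-k+1)/k\rfloor=\lfloor(n+1)/k\rfloor-1$.

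For the first bullet I would show that, when $k\ge\lfloor n/2\rfloor$, the range $\{n-2k+1,\dots,n-k\}$ of \eqref{eq:Nk4} already contains the whole intermediate set $\{i_{n,k}+1,\dots,n-k\}$; equivalently $i_{n,k}\ge n-2k$, i.e.\ $\lfloor(n+1)/k\rfloor\ge n-2k+1$. This is checked directly: if $2k\ge n+1$ the right-hand side is $\le 0$, while the left-hand side is $\ge 1$; the only remaining cases compatible with $k\ge\lfloor n/2\rfloor$ are $2k=n$, where the left-hand side is $\lfloor 2+2/n\rfloor=2\ge 1$, and $2k=n-1$, where it is $\lfloor 2+4/(n-1)\rfloor\ge 2$. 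Hence $N_{n,k,i}=\binom{n}{i}-(i+1)\binom{n-k}{i}$ for every $i\in\{i_{n,k}+1,\dots,n-k\}$ and $N_{n,k,i}=\binom{n}{i}$ for $i\ge n-k+1$; substituting into \eqref{eq: reliability polynomial} and using $\binom{n-k}{i}=0$ for $i>n-k$ to merge the two binomial sums yields the displayed $R(k,n;p)$.

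For the second bullet the same method applies with one extra layer: one partitions $\{i_{n,k}+1,\dots,n-k\}$ into the block $\{n-2k+1,\dots,n-k\}$, governed by \eqref{eq:Nk4}, and the block $\{i_{n,k}+1,\dots,n-2k\}$, governed by \eqref{eq:Nk5}. For the second block to sit inside the validity range of \eqref{eq:Nk5} one needs $i_{n,k}\ge n-3k$, i.e.\ $\lfloor(n+1)/k\rfloor\ge n-3k+1$; using again $i_{n,k}=\lfloor(n+1)/k\rfloor-1$, distinguish $3k>n+1$ (then $n-3k+1\le 0$ while $i_{n,k}\ge 1$, since $k<\lfloor n/2\rfloor\le(n+1)/2$) from $3k\le n+1$ (then $i_{n,k}\ge 2$, whereas $n-3k\le n-3\lfloor n/3\rfloor\le 2$ because $k\ge\lfloor n/3\rfloor$). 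One also notes $n-2k\ge 1$, which holds since $k<\lfloor n/2\rfloor$, so the first block is a genuine sub-interval of the support. Feeding the three expressions for $N_{n,k,i}$ into \eqref{eq: reliability polynomial}, and observing that the correction terms $(i+1)\binom{n-k}{i}$ and $\binom{i+1}{2}\binom{n-2k}{i}$ vanish automatically outside their stated ranges, gives the claimed reliability polynomial; the last sum is simply empty in the sub-case $i_{n,k}\ge n-2k$.

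I expect the only point demanding care — and the only place an off-by-one could creep in — to be precisely these two floor comparisons, $i_{n,k}\ge n-2k$ for $k\ge\lfloor n/2\rfloor$ and $i_{n,k}\ge n-3k$ for $k\ge\lfloor n/3\rfloor$, together with the verification that the short formulas \eqref{eq:Nk4}--\eqref{eq:Nk5} really do tile the entire index set $\{i_{n,k}+1,\dots,n-k\}$ (and not merely a proper part of it) in these regimes of $k$. Once those inclusions are pinned down, the rest is a mechanical substitution into \eqref{eq: reliability polynomial}, with the reliability-polynomial formulae obtained by collecting the $\binom{n}{i}$, $(i+1)\binom{n-k}{i}$ and $\binom{i+1}{2}\binom{n-2k}{i}$ contributions separately.
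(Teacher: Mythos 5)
Your proposal is correct and follows exactly the route the paper intends: the paper states Proposition \ref{pr:2} without an explicit proof, presenting it as the specialisation of Corollary \ref{cor:2} (together with \eqref{eq:Nk1} and \eqref{eq:Nk2}) to the regimes where the ranges of \eqref{eq:Nk4} and \eqref{eq:Nk5} cover all of $\{i_{n,k}+1,\dots,n-k\}$. Your floor-inequality verifications ($i_{n,k}\ge n-2k$ for $k\ge\lfloor n/2\rfloor$ and $i_{n,k}\ge n-3k$ for $k\ge\lfloor n/3\rfloor$) supply precisely the details the paper leaves implicit, and they check out.
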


We now use the fact that $N_{n,k,i}$ can be efficiently computed for any $n$ and $k$ when $i\ge \lfloor\frac{n}{3}\rfloor$, to establish new bounds on the remaining coefficients. 
\begin{proposition}
  For any $k<\lfloor \frac{n}{3}\rfloor$ and $\forall i \in \{i_{n,k}+1,\dots,n-3k\}$
\begin{align}
\begin{split}
N_{n,k,i}&\leq \binom{n}{i}-(i+1)\binom{n-k}{i}+\binom{i+1}{2}\binom{n-2k}{i}\\
N_{n,k,i}&\geq \binom{n}{i}-(i+1)\binom{n-k}{i}.
\end{split}
\end{align}
\end{proposition}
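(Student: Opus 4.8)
The plan is to recognize formula \eqref{eq:Nk3} as an inclusion--exclusion expansion and then quote the Bonferroni inequalities, which say exactly that truncating such an expansion after an odd (resp.\ even) number of terms produces a lower (resp.\ upper) bound for the quantity being expanded; the two displayed inequalities are precisely the truncations after one and after two terms.

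First I would make the combinatorial picture explicit. By the bins-and-balls interpretation used in the proof of Theorem~\ref{thm: main result}, $N_{n,k,i}$ is the number of distributions of $n-i$ indistinguishable balls among $i+1$ distinguishable bins in which no bin holds more than $k-1$ balls. Let $\Omega$ be the set of \emph{all} distributions of $n-i$ balls into these $i+1$ bins, so $|\Omega|=\binom{(n-i)+i}{i}=\binom{n}{i}$, and for $\ell\in\{1,\dots,i+1\}$ let $A_\ell\subseteq\Omega$ be those distributions in which bin $\ell$ receives at least $k$ balls; then $N_{n,k,i}=\bigl|\Omega\setminus\bigcup_\ell A_\ell\bigr|$. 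For a $j$-element set $I$ of bins, forcing $k$ balls into each bin of $I$ and distributing the remaining $n-i-jk$ balls freely gives $\bigl|\bigcap_{\ell\in I}A_\ell\bigr|=\binom{n-jk}{i}$, with the convention $\binom{m}{i}=0$ for $m<i$ (which makes the empty-intersection cases $jk>n-i$ automatic). Thus the $j$-th symmetric sum equals $\binom{i+1}{j}\binom{n-jk}{i}$, the $j$-th summand $T_j$ of \eqref{eq:Nk3}; in passing this reproves \eqref{eq:Nk3}.

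Next I would establish the truncation estimate from scratch rather than merely cite it. For $r\ge 0$ put $\Sigma_r:=\sum_{j=0}^{r}(-1)^jT_j$. A distribution $\omega\in\Omega$ with exactly $b=b(\omega)$ overfull bins is counted $\binom{b}{j}$ times in the $j$-th symmetric sum, so it contributes $\sum_{j=0}^{r}(-1)^j\binom{b}{j}=(-1)^r\binom{b-1}{r}$ to $\Sigma_r$ (with the value $1$ when $b=0$). Summing over $\omega$ yields
\[
\Sigma_r=N_{n,k,i}+(-1)^r\sum_{\omega\in\Omega,\ b(\omega)\ge 1}\binom{b(\omega)-1}{r},
\]
and since the residual sum is a sum of nonnegative integers we get $\Sigma_r\ge N_{n,k,i}$ when $r$ is even and $\Sigma_r\le N_{n,k,i}$ when $r$ is odd. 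Taking $r=1$ gives the claimed lower bound and $r=2$ the claimed upper bound. I would then remark that the hypotheses $k<\lfloor n/3\rfloor$ and $i_{n,k}+1\le i\le n-3k$ serve only to exclude the values of $i$ for which an exact formula was already available from Corollary~\ref{cor:2} and Proposition~\ref{pr:2} (equivalently, they ensure $\lfloor(n-i)/k\rfloor\ge 3$, so that $\Sigma_1,\Sigma_2$ are genuine truncations of \eqref{eq:Nk3}); the inequalities themselves hold for every admissible $i$.

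I do not expect a real obstacle here: once the inclusion--exclusion structure of \eqref{eq:Nk3} is written down, the only delicate points are the evaluation of $\bigl|\bigcap_{\ell\in I}A_\ell\bigr|$ and the sign bookkeeping in the identity $\sum_{j=0}^{r}(-1)^j\binom{b}{j}=(-1)^r\binom{b-1}{r}$. If one instead wanted an argument internal to \eqref{eq:Nk3} with no combinatorial detour, the hard part would shift to proving $T_j\ge T_{j+1}$ for all $j\ge 2$ in the given range --- via $\frac{T_{j+1}}{T_j}=\frac{i+1-j}{j+1}\cdot\frac{\binom{n-jk-i}{k}}{\binom{n-jk}{k}}$ --- after which the bounds follow from the bracketing property of alternating sums with nonincreasing terms; there the ratio estimate is the one nontrivial step, and it is exactly the constraint $i\ge i_{n,k}+1$ that keeps the factor $\frac{i+1-j}{j+1}$ small enough for the argument to go through.
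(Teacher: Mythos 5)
Your proof is correct: the evaluation $\bigl|\bigcap_{\ell\in I}A_\ell\bigr|=\binom{n-jk}{i}$, the identification of the symmetric sums with the terms of \eqref{eq:Nk3}, and the Bonferroni truncation identity $\sum_{j=0}^{r}(-1)^j\binom{b}{j}=(-1)^r\binom{b-1}{r}$ all check out, and taking $r=1,2$ gives exactly the two stated bounds. The paper leaves this proposition unproved, but your argument is precisely the inclusion--exclusion/Bonferroni reading of \eqref{eq:Nk3} that the statement is clearly built on, so this is the intended route, carried out in full.
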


Straightforward, we now define for any $k<\lfloor \frac{n}{3}\rfloor$ and $\forall i \in \{i_{n,k}+1,\dots,n-3k\}$ the upper and lower bounds as
\begin{align}
U_{n,k,i}&\triangleq\min\left\{\binom{n}{i}, \binom{n}{i}-(i+1)\binom{n-k}{i}+\binom{i+1}{2}\binom{n-2k}{i}\right\}\\
L_{n,k,i}&\triangleq\max \left\{0,\binom{n}{i}-(i+1)\binom{n-k}{i}\right\}.
\end{align}


\section{Simulations}

We have performed a series of simulations to test our results. We illustrate here only a small part of those, more exactly for $n\in\{16, 32, 64\}.$ 

In Figure \ref{fig:1} we plot $R(k,n;p)$ (i.e., $R(k,16;p)$ (\ref{fig:1a}), $R(k,32;p)$ (\ref{fig:1c}), and $R(k,64;p)$ (\ref{fig:1e})), as well as the relative errors of the approximation of $N_{n,k,i}$ using $L_{n,k,i}$ and $U_{n,k,i}$ in Figs. (\ref{fig:1b}), (\ref{fig:1d}), and (\ref{fig:1f}). More precisely, we plot $1-L_{n,k,i}/N_{n,k,i}$ for $k\ge \lfloor n/2\rfloor$ (light magenta) and $2<k<\lfloor n/3\rfloor$, and $1-U_{n,k,i}/N_{n,k,i}$ for $\lfloor n/3\rfloor \leq k<\lfloor n/2\rfloor$ (dark magenta).

\begin{figure}[htbp]
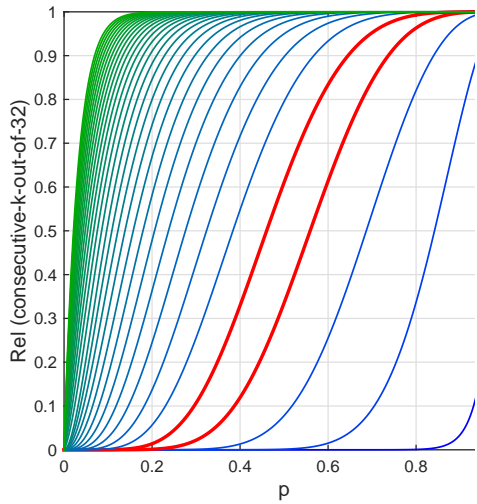
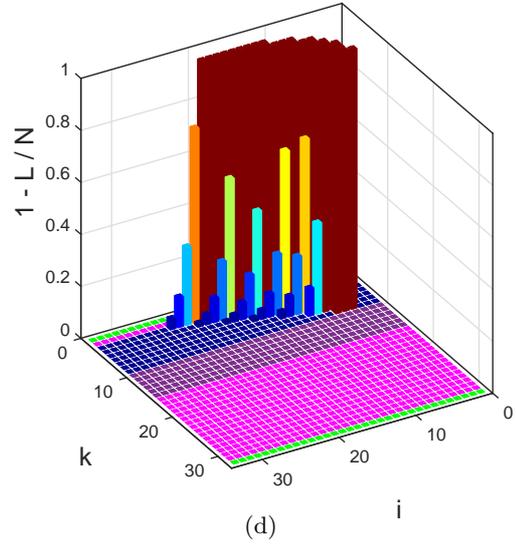
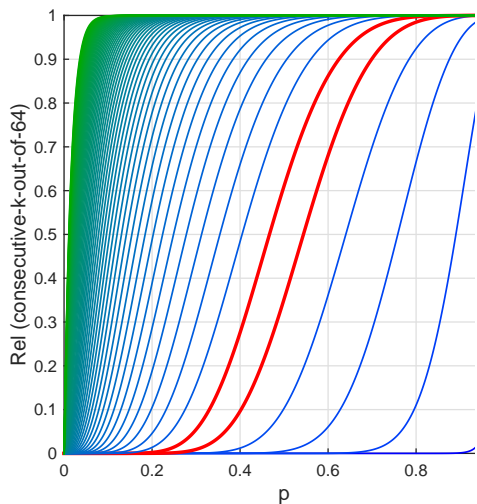
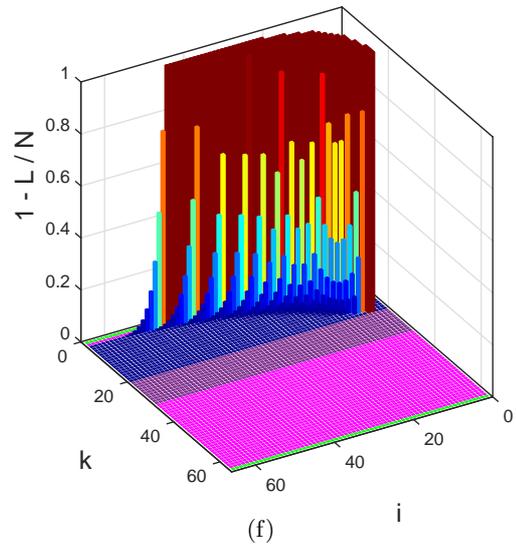

\centering
\subfloat[\label{fig:1a}]{\includegraphics[height=0.45\textwidth,width=0.45\textwidth]{Rel16.eps}}\hfill
\subfloat[\label{fig:1b}] {\includegraphics[height=0.45\textwidth,width=0.45\textwidth]{C16Er.eps}}\hfill
\subfloat[\label{fig:1c}]{\includegraphics[height=0.45\textwidth,width=0.45\textwidth]{Rel32.eps}}\hfill
\subfloat[\label{fig:1d}] {\includegraphics[height=0.45\textwidth,width=0.45\textwidth]{C32Er.eps}}\hfill
\subfloat[\label{fig:1e}]{\includegraphics[height=0.45\textwidth,width=0.45\textwidth]{Rel64.eps}}\hfill
\subfloat[\label{fig:1f}] {\includegraphics[height=0.45\textwidth,width=0.45\textwidth]{C64Er.eps}}\hfill
\caption{$R(k,n;p)$ for: (a) $n=16$, (c) $n=32$, and (e) $n=64$, as well as the relative errors for: $N_{n,k,i}$ for (b) $n=16$, (d) $n=32$, and (f) $n=64$.} \label{fig:1}
\end{figure}

\begin{paragraph}{Remarks}

\begin{itemize}
    \item The flat surfaces in Figs. \ref{fig:1b}, \ref{fig:1d}, and \ref{fig:1f} (green and magenta), show that the coefficients $N_{n,k,i}$ are computed exactly. This is a direct consequence of Proposition \ref{pr:2}.
    \item Focusing our attention on the case $2<k<\lfloor n/3\rfloor$, the absolute errors are different than $0$ in only a few cases.
    \item  The number of coefficients which are computed exactly (dark blue) is significantly larger than the number of approximated coefficients, e.g., for $n=32$, almost 81\% are computed exactly ($187$ out of $231$).
    \item The number of approximated coefficients is a decreasing function of $k$. Hence, as $k$ is approaching $n/3$, the number of exactly computed coefficients increases. For example, for $n=32$ and $k=9$ slightly over 90\% of the coefficients are computed exactly ($30$ out of $33$).  
    \item The worst approximation with respect to the absolute error ($N_{n,k,i}-L_{n,k,i}$) is achieved for $k=3$, and any $n\leq 64.$
\end{itemize}

\end{paragraph}

That is why we have decided to plot the exact reliability polynomial (red) together with the reliability polynomials obtained using the upper (green) and the lower (blue) bounds for $k=3$ and $n=16$ (Fig. \ref{fig:2}). Notice in Fig. (\ref{fig:2a}) that from $p\ge 0.5$ the approximations are practically overlapping with the exact reliability, while for smaller values of $p$ the behaviour of the two bounds can be seen in Fig. (\ref{fig:2b}).

\begin{figure}[htbp]
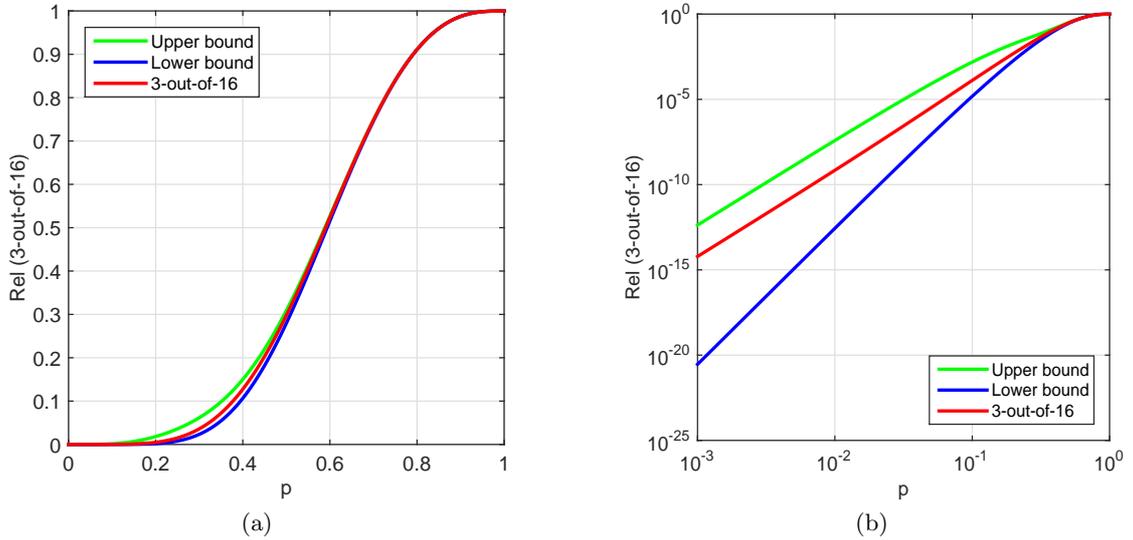

\centering
\subfloat[\label{fig:2a}]{\includegraphics[height=0.45\textwidth,width=0.45\textwidth]{Rel3_16L.eps}}\hfill
\subfloat[\label{fig:2b}] {\includegraphics[height=0.45\textwidth,width=0.45\textwidth]{Rel3_16Log.eps}}\hfill
\caption{Reliability of a consecutive-$3$-out-of-$16$:$F$ system, and its upper and lower bounds: (a) linear scale, and (b) logarithmic scale.} \label{fig:2}
\end{figure}

\begin{figure}[htbp]
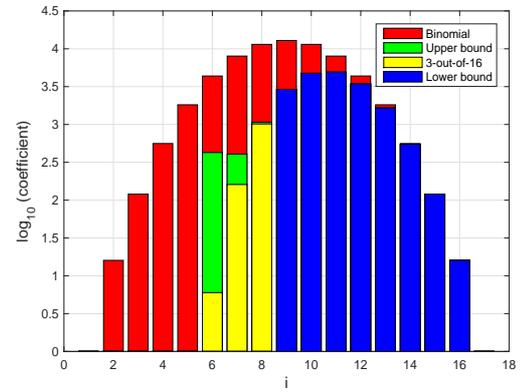
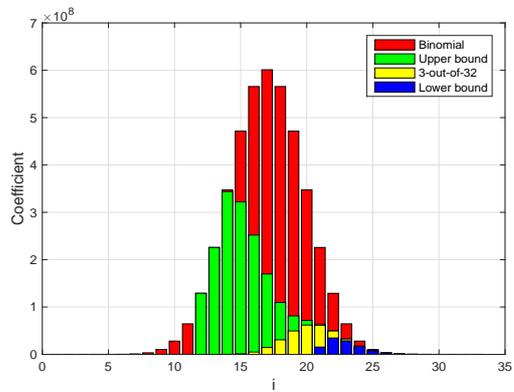
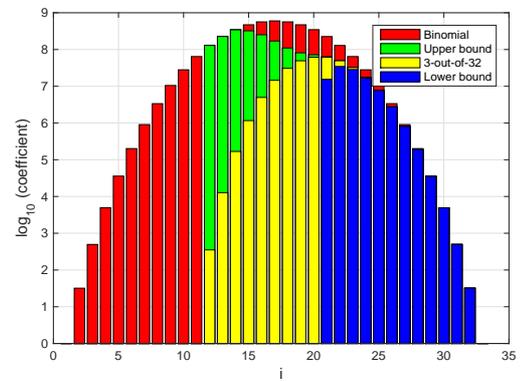
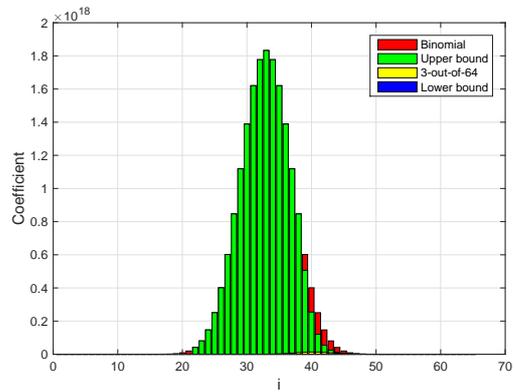
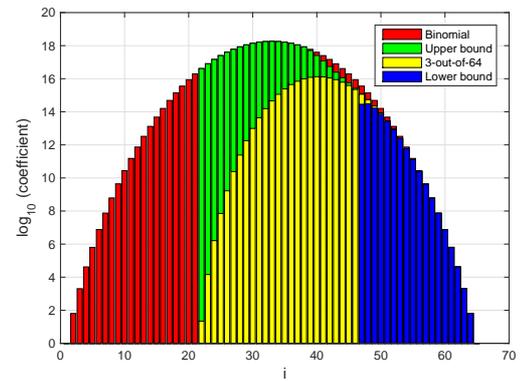

\centering
\subfloat[\label{fig:3a}]{\includegraphics[height=0.35\textwidth,width=0.45\textwidth]{C16L.eps}}\hfill
\subfloat[\label{fig:3b}] {\includegraphics[height=0.35\textwidth,width=0.45\textwidth]{C16Log.eps}}\hfill
\subfloat[\label{fig:3d}]{\includegraphics[height=0.35\textwidth,width=0.45\textwidth]{C32L.eps}}\hfill
\subfloat[\label{fig:3e}] {\includegraphics[height=0.34\textwidth,width=0.45\textwidth]{C32Log.eps}}\hfill
\subfloat[\label{fig:3f}]{\includegraphics[height=0.35\textwidth,width=0.45\textwidth]{C64L.eps}}\hfill
\subfloat[\label{fig:3g}] {\includegraphics[height=0.34\textwidth,width=0.45\textwidth]{C64Log.eps}}\hfill
\caption{$N_{n,k,i}, L_{n,k,i}, U_{n,k,i}$, and $\binom{n}{i}$ in linear scale (left) and logarithmic scale (right) for: (a)-(b) $n=16$, (c)-(d) $n=32$, and (e)-(f) $n=64$.} \label{fig:3}
\end{figure}

Finally, Fig. \ref{fig:3} details the exact coefficients (yellow), as well as their lower (blue) and upper (green) bounds, on top of the corresponding binomial coefficients (red), in both linear and logarithmic scales.

\section{Conclusions}

In this paper, we have determined closed formulae for the reliability of a consecutive $k$-out-of-$n$:$F$ system expressed in the Bernstein basis. Based on the properties of the coefficients, we have proposed simple and easy to compute formulae for all $k\ge \lfloor n/3\rfloor$. For the remaining range of values, namely for $3\leq k<\lfloor n/3\rfloor$, we have proposed lower and upper bounds on the coefficients, and thus bounds on reliability. These bounds have several interesting properties, becoming sharper and sharper as $n$ gets larger, while requiring lower and lower computation work factors. 

The approach we have presented here opens the road to a new research direction in the area of consecutive systems. To our knowledge this is the first time bounding/approximating techniques have been used selectively only on a few of the coefficients of a consecutive system, rather than bounding the reliability polynomial. Detailed estimates of the trade-offs between computation complexity and accuracy of approximations have to be evaluated against previously published results for a better understanding of the advantages and disadvantages of the proposed approach (not included due to space limitations).    


\section*{Acknowledgements}
This research was supported by the European Union through the European Regional Development Fund under the Competitiveness Operational Program (\textit{BioCell-NanoART = Novel Bio-inspired Cellular Nano-Architectures}, POC-A1.1.4-E-2015 nr. 30/01.09.2016).

%
%
%
%
%
\bibliographystyle{plain}

\end{document}